  \providecommand\BibTeX{{%
    Bib\TeX}}}
\gdef\@copyrightpermission{
  \begin{minipage}{0.3\columnwidth}
  \href{https://creativecommons.org/licenses/by/4.0/}{
  \includegraphics[width=.9\columnwidth]{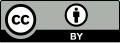}
  }
  \end{minipage}\hfill
  \begin{minipage}{0.7\columnwidth}
   \href{https://creativecommons.org/licenses/by/4.0/}{This work is licensed under a Creative Commons Attribution International 4.0 License.}
  \end{minipage}
  \vspace{5pt}
}
\newcolumntype{C}[1]{>{\centering\arraybackslash}p{#1}}
\newtheorem{definition}{Definition}[section]
\newtheorem{theorem}{Theorem}[section]
\newtheorem{corollary}{Corollary}[theorem]
\newtheorem{lemma}[theorem]{Lemma}
\newcommand{\BibTeX}{B\kern-.05em{\sc i\kern-.025em b}\kern-.08em\TeX}
\begin{document}

\title[Facets of Disparate Impact]{Facets of Disparate Impact:\\Evaluating Legally Consistent Bias in Machine Learning}

\author{Jarren Briscoe}
\email{jarren.briscoe@wsu.edu}
\orcid{0000-0002-7422-9575}
\affiliation{%
  \institution{Washington State University}
  \city{Pullman}
  \state{Washington}
  \country{USA}
}

\author{Assefaw Gebremedhin}
\email{assefaw.gebremedhin@wsu.edu}
\orcid{0000-0001-5383-8032}
\affiliation{%
  \institution{Washington State University}
  \city{Pullman}
  \state{Washington}
  \country{USA}
}

\begin{abstract}
Leveraging current legal standards, we define bias through the lens of marginal benefits and objective testing with the novel metric ``Objective Fairness Index". This index combines the contextual nuances of objective testing with metric stability, providing a legally consistent and reliable measure. Utilizing the Objective Fairness Index, we provide fresh insights into sensitive machine learning applications, such as COMPAS (recidivism prediction), highlighting the metric's practical and theoretical significance. The Objective Fairness Index allows one to differentiate between discriminatory tests and systemic disparities.
\end{abstract}

%%
%% The code below is generated by the tool at http://dl.acm.org/ccs.cfm.
%% Please copy and paste the code instead of the example below.
%%

\begin{CCSXML}
<ccs2012>
   <concept>
       <concept_id>10003456.10003462</concept_id>
       <concept_desc>Social and professional topics~Computing / technology policy</concept_desc>
       <concept_significance>500</concept_significance>
       </concept>
   <concept>
       <concept_id>10003456.10010927</concept_id>
       <concept_desc>Social and professional topics~User characteristics</concept_desc>
       <concept_significance>500</concept_significance>
       </concept>
   <concept>
       <concept_id>10003456.10010927.10003611</concept_id>
       <concept_desc>Social and professional topics~Race and ethnicity</concept_desc>
       <concept_significance>300</concept_significance>
       </concept>
   <concept>
       <concept_id>10003456.10010927.10003613</concept_id>
       <concept_desc>Social and professional topics~Gender</concept_desc>
       <concept_significance>300</concept_significance>
       </concept>
   <concept>
       <concept_id>10003456.10010927.10010930</concept_id>
       <concept_desc>Social and professional topics~Age</concept_desc>
       <concept_significance>300</concept_significance>
       </concept>
   <concept>
       <concept_id>10002944.10011123.10011124</concept_id>
       <concept_desc>General and reference~Metrics</concept_desc>
       <concept_significance>500</concept_significance>
       </concept>
 </ccs2012>
\end{CCSXML}

\ccsdesc[500]{Social and professional topics~Computing / technology policy}
\ccsdesc[500]{Social and professional topics~User characteristics}
\ccsdesc[300]{Social and professional topics~Race and ethnicity}
\ccsdesc[300]{Social and professional topics~Gender}
\ccsdesc[300]{Social and professional topics~Age}
\ccsdesc[500]{General and reference~Metrics}

\keywords{fairness, bias, bias metric, protected classes, legal policy, discrimination law, objective fairness index, disparate impact}

\maketitle

\section{Introduction}

In the 2010s, machine learning took major leaps forward in everyday life. However, we soon realized that bias is rampant in machine learning models. One of the most prominent examples regarding COMPAS, which deals with Americans' civil rights in the United States justice system \cite{brennan2009evaluating,dieterich2016compas}. Researchers are actively working to mitigate bias in general in a myriad of ways \cite{debiasDai,debiasDamak,debiasPrivacy,mehrabi2021survey,caton2020fairness,corbett2018measure,zhang2020machine,fedDebias,selectionBiasKwon,russo2023causal,zhang2023individual,cornacchia2023counterfactual,rec4adBiasGao,abTestBiasLe,nnPruningBriscoe}. Despite being the cornerstone of many proposals, definitions of bias remain inconsistent \cite{caton2020fairness,verma2018fairness}. Due to various philosophies and situations, there is no consensus on a formal definition. We focus on the legal context, namely the disparate impact metric (DI). DI is useful as a flag or reparative measure, however it is insufficient in objective-testing laws.

To address this gap, we introduce the Objective Fairness Index (OFI), extending our work in \cite{fdiOriginal}. We formalize OFI by defining bias as the difference between marginal benefits, derived via the lens of objective testing. In law, objective testing is the philosophy that correct assessments found via rational, related tests are not discriminatory. Additionally, we provide applications to two critical bias scenarios: COMPAS and Folktable's Adult Employment dataset.

The principle of objective testing in non-discrimination laws is recognized not only in the US but worldwide. According to \cite{euObjTesting}, 34 out of 35 surveyed countries (including Spain, Germany, Switzerland, Turkey, and Romania) have such laws in their national legislation. Going beyond the precedents found in the U.S. and the 34 other countries studied in \cite{euObjTesting}, we find that South Africa (Employment Equity Act), Canada (Public Service Employee Relations Commission v British Columbia Government Service Employees' Union), and Australia (Fair Work Act) uphold similar legal standards. Our work with the Objective Fairness Index aims to establish an agreeable and naturally interpretable definition of bias, grounded in these comprehensive legal frameworks and precedents.

We make four key contributions in this work:
\begin{itemize}
    \item We substantially support the Objective Fairness Index with insights from the legal literature.
    \item We show that the disparate impact metric lacks the context of objective testing, which is integral in OFI.
    \item We bring novel insights into two popular ML fairness applications: COMPAS (predicts recidivism) and Folktable's Adult Employment (predicts employment).
    \item We open source our code \href{https://github.com/jarrenbr/objective-fairness-index}{here}\footnote{\href{https://github.com/jarrenbr/objective-fairness-index}{https://github.com/jarrenbr/objective-fairness-index}}. The code gives an efficient solution to handling many confusion matrices, creates the empirical results, and analyzes OFI's behavior.
\end{itemize}
\section{Limitations of Disparate Impact}\label{sect:facetsOfDi}
This section gives context to the founding of the Objective Fairness Index by reviewing the Supreme Court of the United States' decision that spurred disparate impact and a following precedent. Under objective-testing laws, we show that OFI is better than DI in identifying discriminatory tests.

\subsection{Legal Foundations of Disparate Impact}
Disparate impact's significance comes from United States labor laws regarding the biased treatment of protected groups. In Griggs v. Duke Power Co., 401 U.S. 424 (1971), the Supreme Court unanimously agreed that employers must prove that any test or assessment used must be related to job performance (objective). Written by Chief Justice Burger, the decision says that a lower admittance rate of a protected class warrants further investigation---even if such discrimination may be unintentional. Since this precedent was passed from a lawsuit under Title VII of the Civil Rights Act of 1964, the Court's ruling extends to decision-making whenever protected classes are concerned.

To summarize this ruling, employers must have objective and fair tests (neutral in form) and should not inadvertently advantage a protected class (neutral in impact). Should there be a disparate impact, the employer is subject to scrutiny and bears the burden of proof that the test is objective and ``consistent with business necessity''. The disparate impact bias metric is defined in Definition \ref{def:di}. Where
$P=FN+TP$ denotes the amount of positive labels, $\hat{P}=FP+TP$ denotes the amount of positive predictions, $N=FP+TN$ denotes the amount of negative labels, and $\hat{N}=FN+TN$ denotes the amount of negative predictions.
\begin{definition}\label{def:di}
Disparate impact compares the rates of positive outcomes between two groups using ratios.
\begin{align}\label{eq:di}
    DI \triangleq \hat{P}_i/n_i \div \hat{P}_j/n_j
\end{align}
\end{definition}
The implication of Equation \eqref{eq:di} is that $DI=1$ suggests no bias. However, the real world can have small and non-representative sampling, causing unbiased decisions to be $DI\neq 1$. As such, a Four-Fifths Rule (80\% Rule) has been established in law as an initial screening tool to identify potential discrimination (see 29 CFR § 1607.4 - Information on impact) \cite{biddle2017adverse}. This rule suggests that if $DI > 5/4$ then group $i$ has positive bias, if $DI < 4/5$ then there is positive bias for $j$, otherwise DI finds no bias.

In Griggs v. Duke Power Co., the Court goes on to note that Congress does not intend to require businesses to give jobs to unqualified candidates, and that context should still be considered, emphasizing that truly objective assessments are not discriminatory. However, because disparate impact lacks situational context, it cannot capture this legal nuance.

Disparate impact has been influential enough for employers to discard assessments that fail the heuristic to prevent any federal investigation or social uprising, regardless of the assessments' objectivity. Ricci v. DeStefano, 557 U.S. 557 (2009) is the first of these cases to be argued in the Supreme Court when New Haven, Connecticut dismissed objective assessments that benefited the white and Hispanic firefighters over other protected classes. The Court ruled in the firefighters' favor---that omitting an objective performance test solely due to differing acceptance rates (disparate impact) is a violation of the rights of those who correctly benefit from such a test. This provides us with a use case where equal positive prediction rates (which became zero after the test's omission) are discriminatory.

\subsection{Defining Benefit and Bias}
We define bias as a function of what happened to the group (benefit) and what should have happened to the group (expected benefit). An objective test will show that benefit equals expected benefit. Consistent with common understanding and legal precedents, we define benefit as an advantage that is gained from a decision or action, such as hiring from a machine-learning prediction. We achieve formal definitions of these terms using binary confusion matrices, allowing generalization to situations where information such as specific features cannot be used due to privacy, legal protection, or information loss.

We define an individual benefit in Definition \ref{def:indBenefit} to help describe a group's benefit in Definition \ref{def:groupBenefit}. We assume the positive prediction is the beneficial prediction in all definitions.

\begin{figure}[]
    \centering
    \includegraphics[width=.5\linewidth]{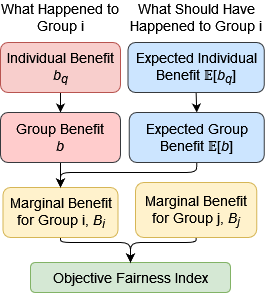}
    \caption{Objective Fairness Index's conceptual framework.}
    \label{fig:ofiDiagram}
      \Description{A flowchart illustrating the Objective Fairness Index's framework.}
\end{figure}

\begin{definition}\label{def:indBenefit}
An individual $q$'s benefit $b_q$ is the individual's prediction. $b_q=1$ if predicted positive, otherwise 0.
\end{definition}
\begin{definition}\label{def:groupBenefit}
A group's benefit is the arithmetic mean of all individual benefits within the group.
\begin{align}\label{eq:groupBenefit}
    b = \frac{1}{n}\sum_{\forall b_q} b_q = \frac{1\cdot \hat{P} + 0\cdot \hat{N}}{n} = \frac{TP+FP}{n}
\end{align}
\end{definition}
\begin{table*}[h]
    \centering
    \caption{Comparison of OFI, DI, and Law (Ricci v. DeStefano) Across Different Scenarios}
    \begin{tabular}{c c c c c c c c c c c c c c}
         \toprule
         Group & Scenario & TP & FN & FP & TN & $n$ & $b$ & $\mathbb{E}[b]$ & $\mathcal{B}$& \multicolumn{1}{c}{OFI: $\mathcal{B}_i-\mathcal{B}_j$} & \multicolumn{1}{c}{DI: $b_i/b_j$} & Ricci v. DeStefano \\
         \midrule
         $i$ & A& 1 & 0 & 0 & 5 & 6 & $1/6$ & $1/6$ & 0 & \multirow{2}{*}{$-0.06$} & \multirow{2}{*}{$0.38$} & \multirow{2}{*}{Likely no bias}\\
         $j$ & A & 7 & 0 & 1 & 10 & 18 & $8/18$ & $7/18$ & $1/18$&&\\
         \hline
         $i$ & B& 0 & 1 & 0 & 5 & 6 & $0/6$ & $1/6$ & $-1/6$&\multirow{2}{*}{$0.22$} & \multirow{2}{*}{NaN (1 by context)}&\multirow{2}{*}{Bias for $i$}\\
         $j$ & B & 0 & 7 & 0 & 11 & 18 & $0/18$ & $7/18$ & $-7/18$&&\\
         % \midrule
         \hline
         $i$ & $\alpha$& 1 & 1 & 0 & 5 & 7 & $1/7$ & $2/7$ & $-1/7$&\multirow{2}{*}{$0.23$} & \multirow{2}{*}{2.71} & \multirow{2}{*}{Bias for $i$}\\
         $j$ & $\alpha$ & 1 & 7 & 0 & 11 & 19 & $1/19$ & $8/19$ & $-7/19$&&\\
         \bottomrule
    \end{tabular}
    \label{tab:margBenHypotheticals}
\end{table*}
The Objective Fairness Index also integrates the expected benefit of an individual (Definition \ref{def:indExpBenefit}) and the group's expected benefit (Definition \ref{def:expGroupBenefit}) by using the actual labels.

\begin{definition}\label{def:indExpBenefit}
An individual's expected benefit $\mathbb{E}[b_q]$ is their actual label. $\mathbb{E}[b_q]=1$ if labeled positive, otherwise 0.
\end{definition}

\begin{definition}\label{def:expGroupBenefit}
A group's expected benefit is the arithmetic mean of all individual expected benefits within the group.
\begin{align}\label{eq:expBenefit}
   \mathbb{E}[b] = \frac{1}{n}\sum_{\forall b_q} \mathbb{E}[b_q] = \frac{1\cdot P- 0\cdot N}{n} = \frac{TP + FN}{n}
\end{align}
\end{definition}

Formally defining benefit and expected benefit allows a counterfactual test of ``what happened'' versus ``what should have happened''. As mentioned, this is of paramount importance in law. In Ricci v. DeStefano, the city of New Haven omitted their test solely based on the ratio of benefits to be obtained by each protected class of firefighters. However, this discriminates against the firefighters qualified by this objective test. Pursuant to the Supreme Court's ruling, we present a solution by scoring this group's treatment as the ``marginal benefit'' ($\mathcal{B}$ in Definition \ref{def:margBenefit}).

\begin{definition}\label{def:margBenefit}
Marginal benefit ($\mathcal{B}$) is the difference between a group's benefit and expected benefit.
\begin{align}\label{eq:margBenefit}
    \mathcal{B} = b - \mathbb{E}[b] = (FP-FN)/n
\end{align}
\end{definition}

This marginal benefit score suggests: $\mathcal{B}<0$ an unjust lack of benefit; $\mathcal{B}=0$ an appropriate amount of benefit; $\mathcal{B}>0$ an unjust surplus of benefit. With marginal benefit defined, we now formalize the definition of the Objective Fairness Index in Definition \ref{def:ofi}. We provide a graphical representation in Figure \ref{fig:ofiDiagram}.
\begin{definition}\label{def:ofi}
    The Objective Fairness Index is the difference between two groups' marginal benefits.
    \begin{align}\label{eq:ofi}
        OFI = \mathcal{B}_i - \mathcal{B}_j = \frac{FP_i-FN_i}{n_i} - \frac{FP_j - FN_j}{n_j} \in [-2,2]
    \end{align}
\end{definition}

\subsection{Improvement over Disparate Impact}\label{sect:ofiImproveDi}
To illustrate the Objective Fairness Index's significance, let us consider the instances in Table \ref{tab:margBenHypotheticals}, similar to the case of New Haven and its firefighters. The same objective test is given in all scenarios. In Scenario A, the employer fulfills their promise and promotes (OFI=-0.06, DI=0.38). In Scenario B, the test is discarded (OFI=0.22, DI=1). Note Scenario B's DI score is undefined. However, since the philosophy of DI suggests equal positive prediction rates indicate no bias, we use DI $=1$.

With OFI's absolute value being greater in Scenario B than in Scenario A, OFI is consistent with the ruling of Ricci v. DeStefano (2009). However, DI disagrees with this precedent, suggesting for strong bias in Scenario B and none in Scenario A.

In Scenario $\alpha$, we conduct a thought experiment with a smoothing technique. Scenario $\alpha$ adds one qualified individual to each group that benefits (e.g., is correctly promoted) to Scenario B. Adding this single correct prediction changes DI's score from 1 to 2.71, suggesting a transition from no bias to heavy bias for group $i$. However, OFI adjusts from 0.22 to 0.23 in the same Scenario $\alpha$, illustrating its robustness.

Unlike DI, the Objective Fairness Index indicates the correct directional bias in both scenarios. Additionally, OFI's defined-everywhere property and robustness against small changes showcase its stability. However, we note that DI is complementary to OFI in systemic disparity detection. With OFI, we can identify if an algorithm is at fault (OFI $\not\approx 0$) or if a systematic disparity exists outside of the algorithm (OFI $\approx 0$ and DI $\not\approx 1$).

\subsection{Establishing OFI Thresholds}
Often, data has noise and perfectly objective tests are an open question. As such, we use thresholds. Some thresholds are given by a rule of thumb, such as DI's Four-Fifths Rule. However, we formally derive OFI's threshold to an upper bound of $OFI \in [-0.3,0.3]$. We obtain these bounds by calculating the standard deviation of OFI over all confusion matrices (see the proofs in our code repository). We stress that the threshold ought to be reduced when we expect better than uniform confusion-matrix distributions, such as in predicting recidivism.
\section{Applying OFI: Empirical Case Studies}
We explore the application of the Objective Fairness Index in practical scenarios: COMPAS (predicts recidivism), and Folktables' Adult Employment dataset (predicts employment).

For COMPAS, we obtain the confusion matrix from actual COMPAS predictions and cases of recidivism occurring within two years \cite{propublica_compas,compasCms}. Since we assume the positive label is beneficial, we flip the COMPAS positive label to be ``not a recidivist".

Figure \ref{fig:compas} shows bias scores for each ethnicity pair in the COMPAS dataset. Here, we see that OFI affirms DI's findings. This is an important revelation, as some argue that the disparate prediction rates are correct. In return, some argue that the labeled dataset suffers from systematic bias. However, this case illustrates that OFI can sidestep the need to prove incorrect ground labels, and shows that even if one assumes correct ground labels, there is algorithmic bias against certain races in COMPAS (OFI $\not\approx 0$). We believe this revelation will encourage more people to take corrective measures.

For Folktable's dataset, we analyze data from the Georgia census (393,236 observations) to predict employment status using a Random Forest classifier (RF) and Na\"ive Bayes (NB). In RF, OFI corroborates the findings from DI. However, OFI reveals differences to DI in the NB experiment (see Figure \ref{fig:folk_nb}). Here, DI suggests a positive bias for Pacific Islanders in 7/8 cases. In contrast, OFI shows positive bias in 2/8 cases.

With the lens of objective testing, we find that Pacific Islanders suffer from algorithmic bias, despite having high disparate impact scores. These studies demonstrate OFI confirming suspicions of algorithmic bias (COMPAS), and when protected classes suffer from algorithmic bias, despite a high positive prediction rate.

%Please see the supplementary material for more figures.
\begin{figure}
    \centering
    \includegraphics[width=.76\linewidth]{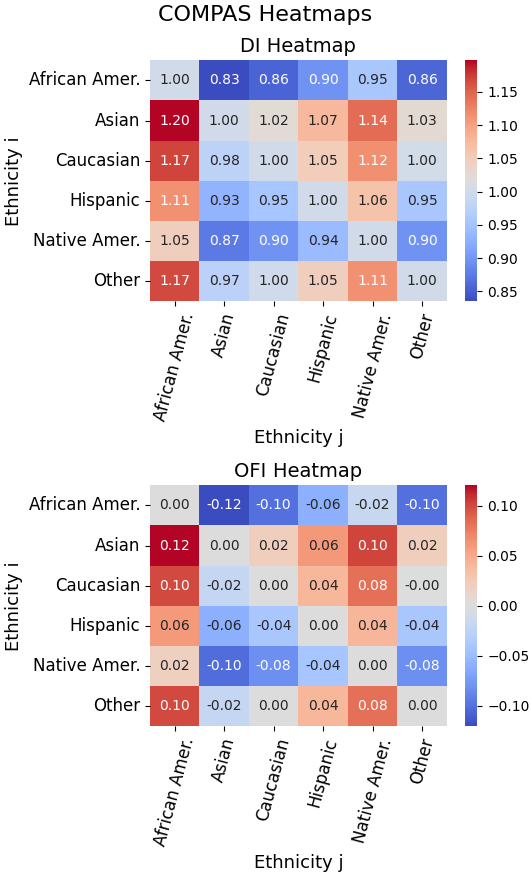}
    \caption{OFI shows that COMPAS manifests algorithmic bias in addition to manifesting disparate impact (DI). We use the predictions and labels published in \cite{propublica_compas}.}
    \label{fig:compas}
    \Description{Two heatmaps showing disparate impact and objective testing index scores for all ethnicity pairs in the COMPAS dataset.}
\end{figure}
\begin{figure}
    \centering
    \includegraphics[width=.87\linewidth]{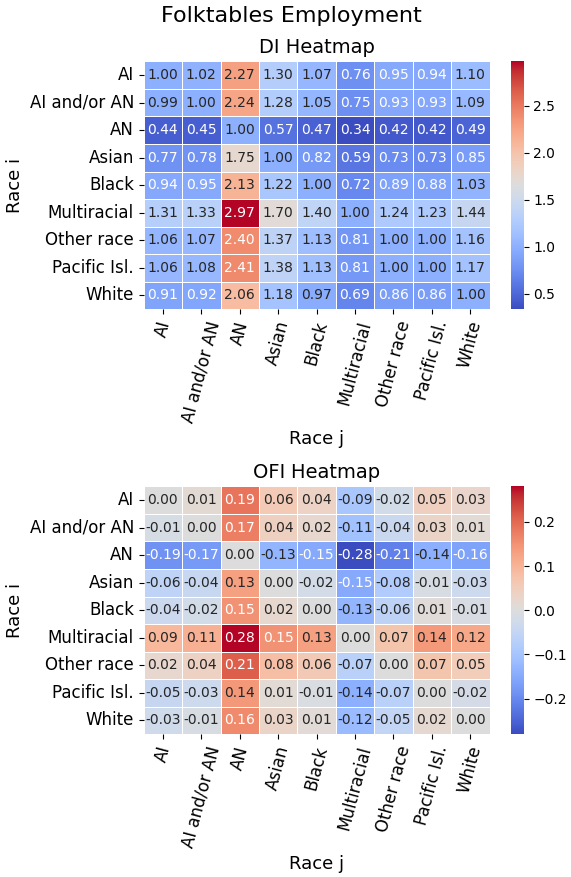}
    \caption{DI conveys that Pacific Islanders have positive bias for 7/8 comparisons. However, OFI shows that Pacific Islanders suffer from algorithmic bias in 6/8 cases. Key: AI is American Indian, AN is Alaska Native, Pacific Isl. is Pacific Islander.}
    \label{fig:folk_nb}
        \Description{Two heatmaps showing disparate impact and objective testing index scores for all race pairs in the Folktables' Employment dataset for the 2014-2017 Georgia census.}
\end{figure}

\section{Conclusions and Future Directions}
In this work, we introduce the Objective
Fairness Index (OFI) to evaluate bias in a legally grounded and context-aware perspective. By leveraging legal frameworks and precedents, particularly those stemming from objective testing and disparate impact theory, OFI captures the nuanced understanding of bias that encompasses both what occurred and what ought to have occurred.

In Section \ref{sect:facetsOfDi}, we illustrate that the Objective Fairness Index is legally grounded, even where the disparate impact metric fails. Specifically, OFI finds no bias in objective testing while DI may or may not, and OFI can correctly detect bias where DI does not. Our analyses include legal-case-driven situations and popular bias problems, including COMPAS (risk of recidivism).

The Objective Fairness Index addresses a significant gap in the machine learning and AI fairness literature by offering a metric that not only aligns with legal standards but also provides novel insights into binary classification problems. Our findings highlight the limitations of existing metrics, such as disparate impact, which lack context and may lead to misinterpretation or oversight of critical bias factors. OFI offers a solution by integrating objective testing directly into the evaluation of bias.

For future work, we envision OFI in more complex scenarios, including multi-class classification problems, regression tasks, and recommender systems. Another promising avenue is the exploration of OFI's role in the development of debiasing techniques. 

Additionally, corrective procedures should be investigated. As an objective measure, OFI identifies discriminatory selections, whereas reparative measures such as DI are for identifying social disparities. Thus, if the OFI fails, correction at the algorithmic or employer level is appropriate. However, if OFI shows no bias but reparative measures do, this suggests systematic disparity and can rationalize social programs. We recommend that all measures and corrective procedures are routinely scrutinized regarding ethics and law.

There are some inherent limitations to applying concepts in law to binary classification. Examples include the ``good faith'' of the test-maker (Canada's Meiorin test), proportional responses (UK's Equality Act 2010, suggesting a non-binary decision), and expectations of ``reasonable accommodations'' for certain situations (Americans with Disabilities Act). We hope to explore solutions for these in more complex scenarios where they are addressable.

Our work underscores the importance of interdisciplinary approaches in tackling bias in AI. By grounding our metric in well-established legal precedents and demonstrating its applicability to contemporary machine learning challenges, we contribute a tool for researchers striving for fairness and transparency in AI systems. 

\begin{acks}
This work was supported by a Department of Education GAANN Fellowship and by cooperative agreement CDC-RFA-FT-23-0069 from the CDC's Center for Forecasting and Outbreak Analytics.
Its contents are solely the responsibility of the authors and do not necessarily represent the official views of the Centers for Disease Control and Prevention or DEd.
\end{acks}

\bibliographystyle{ACM-Reference-Format}
\balance
\bibliography{main}

\clearpage
\appendix

\section{Preliminaries}
This section goes over the preliminaries for our counting and distribution analyses. We begin by redefining the confusion matrix as a row vector for convenience. We abstract away the titles of TP, FN, FP, and TN since all have equal occurrences in the set of all possible confusion matrices. Please see \cite{briscoe2025algorithmic} for a continuation of this work, where we thoroughly investigate distributions of classification metrics, discover issues of small-data analyses, and propose solutions.

\begin{definition}
    The confusion matrix, CM, is a quadruple of non-negative integers.
    \begin{align}
        CM \in \mathbb{N}_0^4
    \end{align}
\end{definition}

\begin{definition}\label{def:allCms}
$\mathcal{M}(n)$ is the set of all possible confusion matrices (CMs) for a given amount of samples, $n > 0$.
\begin{align}
    \label{eq:allCms}
    \mathcal{M}(n) = \{ \text{CM} : \forall \text{CM} \in \mathbb{N}_{0}^4 
    \text{ s.t. } \sum_{c\in \text{CM}}c = n.\}
\end{align}
\end{definition}
We give an example with the set $\mathcal{M}(1)$ in \eqref{eq:allCmsNEquals1}.
\begin{align}\label{eq:allCmsNEquals1}
    \mathcal{M}(1) = \{[1,0,0,0], [0,1,0,0], [0,0,1,0], [0,0,0,1]\}
\end{align}
We move from $\mathcal{M}(1)$ to $\mathcal{M}(2)$ and begin to see counting patterns. We denote the count of the value $x$ for any given cell and $n$ samples as $C(x;n)$. For example, $\mathcal{M}(1)$ shows that $C(1;1)=1$ and $C(0;1)=3$. As we continue this walkthrough, we identify triangular numbers and generalize using said numbers. Note that 1 and 3 are the first two triangular numbers. We follow Knuth's notation for the $w^{th}$ triangular number in \eqref{eq:wTermial} \cite{knuthTermial}.
\begin{align}\label{eq:wTermial}
    w?=\sum_{j=1}^{w}j
\end{align}
Next, we show that the triangular sequence continues for $n=2$ by enumerating in \eqref{eq:m2} and counting in \eqref{eq:m2Counts}. Then, we generalize the relation of $x$ to $n$ in Theorem \ref{thm:ordinalCount}.
\begin{align}\label{eq:m2}
    \mathcal{M}(2) = \begin{cases}
         [2,0,0,0],[0,2,0,0],[0,0,2,0],[0,0,0,2],  \\
         [1,0,0,1],[1,0,1,0],[1,1,0,0], [0,1,0,1],\\
         [0,1,1,0],[0,0,1,1].
    \end{cases}\\\label{eq:m2Counts}
    % \begin{array}{c c c}
    %      C(2;2)=1?=1&C(1;2)=2?=3&C(0;2)=3?=6
    % \end{array}
     \big(C(2;2),C(1;2),C(0;2)\big)=\big(1?,2?,3?\big)=\big(1,3,6\big)
\end{align}
\section{Counting}
\begin{theorem}\label{thm:ordinalCount}
    Given the number of samples, $n$, the count of value $x\in[0,n]$, denoted as $C(x;n)$, appearing in $\mathcal{M}(n)$ is found by
    \begin{align}\label{eq:countXGivenN}
        C(x;n) = 0.5(n-x+1)(n-x+2) = (n-x+1)?
    \end{align}
\end{theorem}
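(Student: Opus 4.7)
The plan is to treat $C(x;n)$ as a stars-and-bars counting problem. By symmetry (all four cells of the confusion matrix play an interchangeable role in $\mathcal{M}(n)$, as noted in the preliminaries), it suffices to count confusion matrices $(a,b,c,d) \in \mathbb{N}_0^4$ with $a+b+c+d = n$ in which a \emph{designated} cell, say $a$, equals $x$, and then observe that the same count applies to any of the four cells.

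Fixing $a = x$ reduces the problem to counting the number of non-negative integer triples $(b,c,d)$ with $b+c+d = n-x$. I would invoke the standard stars-and-bars formula, which gives
\begin{equation*}
\binom{(n-x)+3-1}{3-1} = \binom{n-x+2}{2} = \frac{(n-x+1)(n-x+2)}{2}.
\end{equation*}
A quick sanity check against the preliminaries confirms this: for $n=1$ we recover $C(1;1)=1$ and $C(0;1)=3$; for $n=2$ we recover $(C(2;2),C(1;2),C(0;2)) = (1,3,6)$, matching the enumeration in \eqref{eq:m2Counts}.

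The final step is to reconcile this binomial expression with the triangular-number notation introduced in \eqref{eq:wTermial}. Setting $w = n-x+1$, one has $w? = \sum_{j=1}^w j = \tfrac{w(w+1)}{2} = \tfrac{(n-x+1)(n-x+2)}{2}$, which is precisely the claimed closed form. I would note in passing that the constraint $x \in [0,n]$ guarantees $w \geq 1$, so the triangular-number expression is well-defined.

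I do not anticipate a genuine obstacle here: the argument is essentially one application of stars and bars followed by an algebraic identification with the triangular number. The only subtlety worth flagging explicitly is the symmetry justification, namely that $\mathcal{M}(n)$ is invariant under permutations of the four coordinates, so the count $C(x;n)$ does not depend on which cell we single out; this is what licenses the reduction to counting triples summing to $n-x$.
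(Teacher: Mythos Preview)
Your proposal is correct and follows essentially the same route as the paper: fix one cell at value $x$, apply stars and bars to the remaining three cells summing to $n-x$ to obtain $\binom{n-x+2}{2}$, and then identify this with the triangular number $(n-x+1)?$ via the substitution $w=n-x+1$. Your added remarks on symmetry and the sanity check are nice but do not change the argument's substance.
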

\begin{proof}
    First, we prove that $C$ follows the triangular sequence. Since all cells are interdependent and required to sum to $n$, we use the stars and bars method for $n^*$ stars and $k-1$ bars \eqref{eq:starsBars}.
    \begin{align}\label{eq:starsBars}
        \binom{n^*+k-1}{k -1}
    \end{align}
    If cell $c$ has value $x$, then the other three cells must comprise of three non-negative integers summing to $n-x$. In the stars and bars terminology, there are $n-x$ stars and $k=3$ other buckets/cells to choose from.
    \begin{align}
        C(x;n) = \binom{n-x+3-1}{3-1} = \binom{n-x+2}{2}
    \end{align}
    Rewriting $C(x;n)$ from the binomial coefficient, we create \eqref{eq:countIntermediate}.
    \begin{align}\label{eq:countIntermediate}
        C(x;n)=\frac{(n-x+2)!}{2!(n-x)!} = 0.5(n-x+1)(n-x+2)
    \end{align}
    Now we prove that $C(x;n)$ is a triangular number:
    \begin{align}\label{eq:countTriangle}
        0.5(n-x+1)(n-x+2) = (n-x+1)?
    \end{align}
    It is well established that the $w^{th}$ triangular number is found by
    \begin{align}
        w? = 0.5w(w+1).
    \end{align}
    So by substituting $w=n-x+1$, we see that \eqref{eq:countTriangle} follows the formula for $w?$.
\end{proof}

\begin{corollary}
    The number of possible combinations for any unique CM for a given $n$ is the cardinality of $\mathcal{M}(n)$, denoted as $\mathcal{N}(n)$. $\mathcal{N}(n)$ can be found in a similar fashion to $C(x;n)$ with $n^*=n$ and $k=4$. $k=4$ since there are four cells to choose from.
    \begin{align}\label{eq:totalCombs}
        \mathcal{N}(n) = \vert \mathcal{M}(n)\vert = \binom{n + 4-1}{4-1} = \frac{(n+1)(n+2)(n+3)}{6}
    \end{align}
\end{corollary}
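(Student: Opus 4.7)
The plan is a direct application of the stars and bars identity already invoked in the proof of Theorem \ref{thm:ordinalCount}. By Definition \ref{def:allCms}, every element of $\mathcal{M}(n)$ is a 4-tuple of non-negative integers summing to $n$, with no further restriction. So the count reduces to a textbook distribution problem: place $n$ indistinguishable units into $4$ distinguishable cells (TP, FN, FP, TN).

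First, I would invoke \eqref{eq:starsBars} with the parameters indicated in the statement, namely $n^*=n$ and $k=4$. Unlike in Theorem \ref{thm:ordinalCount}, no cell is pinned to a specific value $x$, so we use $k=4$ rather than $k=3$. This yields
\begin{equation*}
    \mathcal{N}(n) = \binom{n+4-1}{4-1} = \binom{n+3}{3}.
\end{equation*}
Next, I would simplify the binomial coefficient to the claimed polynomial form via $\binom{n+3}{3} = (n+3)!/(3!\,n!) = (n+1)(n+2)(n+3)/6$, mirroring the algebraic step taken between \eqref{eq:starsBars} and \eqref{eq:countIntermediate} in the preceding proof.

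There is no substantive obstacle; the only conceptual subtlety is justifying the shift from $k=3$ to $k=4$, which follows from the fact that none of the four cells is now constrained. A useful sanity check is the identity $\sum_{x=0}^{n} C(x;n) = \mathcal{N}(n)$, obtained by partitioning $\mathcal{M}(n)$ according to the value of a single fixed cell. Substituting Theorem \ref{thm:ordinalCount} gives $\sum_{j=0}^{n}\binom{j+2}{2}$, which collapses to $\binom{n+3}{3}$ by the hockey-stick identity, matching the claim. Finally, the boundary case $n=0$ yields $1\cdot 2\cdot 3/6 = 1$, correctly counting the single all-zero confusion matrix.
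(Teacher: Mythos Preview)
Your proposal is correct and follows the same approach as the paper: the paper's justification is embedded in the corollary statement itself and amounts precisely to applying the stars-and-bars formula \eqref{eq:starsBars} with $n^*=n$ and $k=4$, then simplifying the binomial coefficient. Your additional sanity check via the hockey-stick identity is not in the paper's argument, but it nicely anticipates Corollary~\ref{cor:countAllEqualSum}, which records exactly the relation $\sum_{x=0}^n C(x;n)=\mathcal{N}(n)$.
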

\begin{corollary}\label{cor:countAllEqualSum}
    The sum of the counts of all values for a given cell is the total number of possible combinations.
    \begin{align}\label{eq:countAllEqualSum}
       \mathcal{N}(n) = \sum_{\alpha=0}^{n}C(\alpha;n)
    \end{align}
\end{corollary}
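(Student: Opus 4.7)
The plan is to prove the corollary by a straightforward partitioning argument, with an algebraic verification available as a backup. The key observation is that for any fixed cell $c$ of the confusion matrix, the value stored in $c$ for a given CM $\in \mathcal{M}(n)$ must lie in $\{0,1,\ldots,n\}$ (since all four cells are non-negative integers summing to $n$). Hence the events ``cell $c$ equals $\alpha$'' for $\alpha = 0,1,\ldots,n$ form a partition of $\mathcal{M}(n)$.

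First I would define, for each $\alpha \in \{0,\ldots,n\}$, the subset $\mathcal{M}_\alpha(n) \subseteq \mathcal{M}(n)$ consisting of those CMs whose fixed cell $c$ has value $\alpha$. By the remark above, these subsets are pairwise disjoint and their union is all of $\mathcal{M}(n)$, so $|\mathcal{M}(n)| = \sum_{\alpha=0}^{n} |\mathcal{M}_\alpha(n)|$. Then I would invoke Theorem \ref{thm:ordinalCount}, which already tells us that $|\mathcal{M}_\alpha(n)| = C(\alpha; n)$, since $C(\alpha;n)$ was defined precisely as the count of CMs in $\mathcal{M}(n)$ with value $\alpha$ in the chosen cell. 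Combining these two facts yields $\mathcal{N}(n) = |\mathcal{M}(n)| = \sum_{\alpha=0}^{n} C(\alpha;n)$, which is exactly \eqref{eq:countAllEqualSum}.

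As a sanity check (and an alternative proof if desired), I would verify the identity algebraically by substituting the closed forms. Using $C(\alpha;n) = \binom{n-\alpha+2}{2}$ from Theorem \ref{thm:ordinalCount} and re-indexing with $\beta = n-\alpha$, the sum becomes $\sum_{\beta=0}^{n} \binom{\beta+2}{2}$. The hockey-stick identity collapses this to $\binom{n+3}{3} = \frac{(n+1)(n+2)(n+3)}{6}$, which matches the closed form for $\mathcal{N}(n)$ given in \eqref{eq:totalCombs}.

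There is essentially no obstacle here: the corollary is really just the statement that summing the sizes of a partition recovers the size of the whole set, and the earlier theorem supplies the sizes of the partition blocks. The only mild subtlety is ensuring the partition is exhaustive, i.e., that no value outside $[0,n]$ can appear in a cell---which is immediate from Definition \ref{def:allCms}, since all four non-negative entries sum to $n$ and hence each is bounded by $n$.
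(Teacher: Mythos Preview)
Your proposal is correct. The paper itself offers no proof for this corollary, presumably treating it as immediate from the stars-and-bars setup; your partitioning argument makes explicit exactly why it is immediate, and your hockey-stick verification matches the closed form in \eqref{eq:totalCombs}.
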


\section{A Quasi-Triangular Distribution}
We move to prove that $\mathcal{B}$ mimics a triangular distribution, but will not ever be triangular.

\begin{figure}
    \centering
    \includegraphics[width=\linewidth]{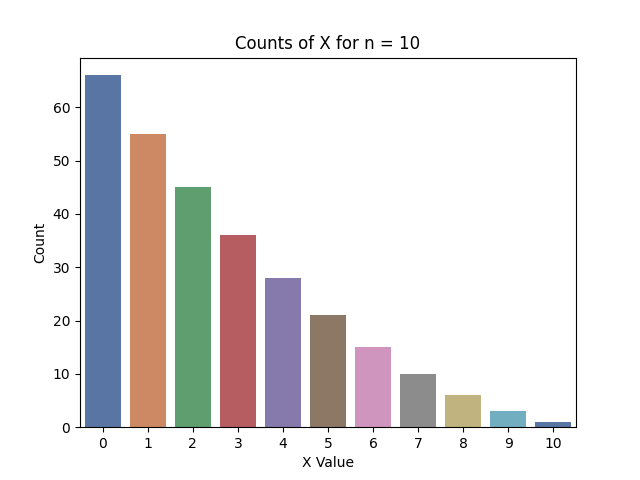}
    \caption{The number of times the value $x$ appears in $\mathcal{M}$.}
    \label{fig:countXGivenN}
\end{figure}

\begin{lemma}\label{thm:countxInc}
    The count of value $x$ increases by $n-x+2$ for each increment of $n$. We illustrate this in Figure \ref{fig:countXGivenN}.
\end{lemma}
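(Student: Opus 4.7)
The plan is to reduce the claim to a direct algebraic computation using the closed form already established in Theorem \ref{thm:ordinalCount}. Since that theorem gives $C(x;n) = 0.5(n-x+1)(n-x+2)$ for every $x \in [0,n]$, the quantity $C(x;n+1) - C(x;n)$ is a well-defined polynomial expression in $n$ and $x$, and Lemma \ref{thm:countxInc} simply asserts that this difference equals $n-x+2$.

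First, I would instantiate the formula at $n+1$ to obtain $C(x;n+1) = 0.5(n-x+2)(n-x+3)$, noting that $x \in [0,n] \subset [0,n+1]$ so the formula applies. Then I would form the difference and factor out the common term $0.5(n-x+2)$, yielding
\begin{align}
C(x;n+1) - C(x;n) = 0.5(n-x+2)\bigl[(n-x+3) - (n-x+1)\bigr] = n-x+2,
\end{align}
which is exactly the claimed increment.

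I do not anticipate any real obstacle: the lemma is essentially a one-line corollary of Theorem \ref{thm:ordinalCount}, and the only subtlety is a boundary check, namely verifying that $x \le n$ guarantees both $C(x;n)$ and $C(x;n+1)$ are given by the same closed form (so the algebraic manipulation is legitimate throughout the intended range). I would briefly remark, for interpretive value, that this matches the visual behavior in Figure \ref{fig:countXGivenN}: the increment is linear in $n-x$, so small values of $x$ (which are closer to filling the entire mass $n$) gain more multiplicity as $n$ grows, while values near $x = n$ always gain only $2$ new matrices.
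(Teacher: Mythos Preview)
Your proposal is correct and mirrors the paper's own proof: both invoke Theorem~\ref{thm:ordinalCount} and compute $C(x;n+1)-C(x;n)$ directly, the only cosmetic difference being that the paper uses the triangular-number form $C(x;n)=(n-x+1)?$ and the identity $(w+1)?-w?=w+1$, whereas you expand the equivalent polynomial $0.5(n-x+1)(n-x+2)$ and factor. No additional idea is needed.
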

\begin{proof}
    From Theorem \ref{thm:ordinalCount}, we prove that $C(x;n) = (n-x+1)?$.
    As such,
    \begin{align}
        C(x;n+1) = (n-x+2)?\\
        C(x;n+1) - C(x;n) = n-x+2.
    \end{align}
\end{proof}

\begin{theorem}
    $\mathcal{B}$'s distribution is not triangular, although it mimics a triangular distribution as $n$ approaches $\infty$.
\end{theorem}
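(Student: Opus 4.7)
The plan is to derive, for each integer $k$ with $|k| \le n$, the exact count $N(k;n) := |\{\mathrm{CM} \in \mathcal{M}(n) : FP - FN = k\}|$, and then analyze it both at fixed $n$ and under the rescaling $k \mapsto k/n$ as $n \to \infty$. A genuinely triangular distribution requires the count values to decrease linearly from the peak; the strategy is to show that the parity of $n-k$ forces two interleaved closed forms that never fit a linear profile at finite $n$, while their discrepancy is of lower order asymptotically.

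First I would set $FP = FN + k$ for $k \ge 0$ (the $k < 0$ case follows by the $FP \leftrightarrow FN$ symmetry of $\mathcal{M}(n)$), reducing the sum constraint $TP + FN + FP + TN = n$ to $TP + TN + 2\,FN = n - k$. Iterating over $FN \in \{0, 1, \ldots, \lfloor (n-k)/2 \rfloor\}$ and invoking a stars-and-bars argument on the pair $(TP, TN)$ in the spirit of Theorem \ref{thm:ordinalCount} yields
\begin{align*}
N(k;n) \;=\; \sum_{f=0}^{\lfloor (n-k)/2 \rfloor} (n - k - 2f + 1),
\end{align*}
which evaluates to $(s+1)^2$ when $n-k = 2s$ and to $(s+1)(s+2)$ when $n-k = 2s+1$. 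For the ``not triangular'' direction, the consecutive differences $N(k;n) - N(k+1;n)$ alternate between $2s+1$ and $2s+2$ depending on parity, hence are not constant for any $n \ge 1$, contradicting the linear PMF required for a triangular distribution; e.g., $N(k;10)$ takes the non-arithmetic values $36, 30, 25, 20, 16, 12, 9, 6, 4, 2, 1$ for $k = 0, \ldots, 10$. For the ``mimics triangular'' direction, both parity branches satisfy $N(k;n) = \tfrac{1}{4}(n - |k|)^2 + O(n - |k|)$, so after normalizing by $\mathcal{N}(n)$ from \eqref{eq:totalCombs} and rescaling $x = k/n$, the parity-induced oscillation becomes an $O(1/n)$ perturbation of a smooth, symmetric, compactly supported limiting profile on $[-1,1]$ — peaked at $0$ and vanishing at $\pm 1$ — qualitatively matching the shape of a triangular distribution.

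The main obstacle I anticipate is clean parity bookkeeping: deriving the two closed forms without off-by-one slips, and articulating precisely what ``mimics'' should mean — most naturally, pointwise (or weak) convergence of the rescaled empirical distribution once the parity oscillation is averaged out. Once the closed forms for $N(k;n)$ are pinned down, the contradiction with a linear PMF and the vanishing of the parity discrepancy in the large-$n$ limit follow by routine manipulation of the two expressions.
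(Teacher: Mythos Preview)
Your argument is correct and takes a genuinely different route from the paper. The paper establishes ``not triangular'' by a moment contradiction: it quotes Theorem~\ref{thm:stdB} that $\sigma(\mathcal{B};n)\to 1/\sqrt{10}$, whereas a symmetric triangular law on $[-1,1]$ has standard deviation $1/\sqrt{6}$. You instead compute the exact level counts $N(k;n)$ and show directly that the successive differences are not constant (they alternate with the parity of $n-k$), so the PMF is never piecewise linear at any finite $n$. What your approach buys is an explicit limiting density: your formula $N(k;n)=\tfrac14(n-|k|)^2+O(n-|k|)$, after normalising by $\mathcal{N}(n)\sim n^3/6$ and rescaling $x=k/n$, gives the parabolic law $\tfrac32(1-|x|)^2$ on $[-1,1]$, whose variance is exactly $1/10$ --- so you recover the paper's standard-deviation result as a by-product and make precise that the limit is Epanechnikov-shaped rather than triangular. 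The paper's route is shorter (one line once Theorem~\ref{thm:stdB} is in hand) and rules out the triangular limit cleanly, while yours is more constructive and makes the ``mimics'' claim concrete: the parity oscillation is an $O(1/n)$ perturbation of a smooth, symmetric, unimodal limit. You might sharpen your write-up by stating explicitly that this limit is $\propto(1-|x|)^2$, since that simultaneously nails down ``not triangular in the limit'' and quantifies what ``mimics'' means.
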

\begin{proof}
    A triangular distribution is defined by a peak (mode), a minimum $a$, a maximum $b$, and a linear increase from $a$ to the mode and from $b$ to the mode. $\mathcal{B}$'s minimum and maximum are trivially -1 and 1. We will now prove that $\mathcal{B}$'s mode is zero.

    The distribution of $\mathcal{B}$ is calculated by collecting all scores, including duplicates. We name this distribution $d_\mathcal{B}$.
    \begin{align}
        d_\mathcal{B} = \{\mathcal{B}(FP,FN), \forall CM \in \mathcal{M}
    \end{align}

    We define the distribution to highlight that every valid confusion matrix is considered. Furthermore, $\mathcal{B}$ is the difference between two values with the same distribution, subject to the constraint $FP+FN\leq n$. As such, we find that equivalent values are the most common pairs. The difference between equivalent values is zero, hence the mode is zero.

    If $d_\mathcal{B}$ was triangular, the final step would be to prove its piece-wise linear property with the endpoints of extrema and mode. For $d_\mathcal{B}$, this is to prove that each total (sum of all counts yielding a score) increments linearly with respect to the score. We found this difficult since some values ($x$) for a finite $n$ have their multiplicity decrease by one from the prior value's ($x-1$) multiplicity when the trend should be overall increasing. This discrepancy becomes exceedingly minuscule as we scale $n$ as evident.

    But, we prove that $\mathcal{B}$ is not a triangular distribution by contradiction. Should $\mathcal{B}$ be triangular, then its standard deviation would be found by $1/\sqrt{6}$ since it is symmetric with bounds $[-1,1]$ \cite{triDist}. However, we show in Theorem \ref{thm:stdB} that $d_\mathcal{B}$'s standard deviation converges to $1/\sqrt{10}$.
\end{proof}

\section{Standard Deviation and Mean of $\mathcal{B}$}\label{sect:bProofs}
We find the mean $\bar{\mathcal{B}}$, variance $\sigma^2$, and standard deviation $\sigma$ for the marginal benefit $\mathcal{B}$. Additionally, we provide their limits/convergences. We begin with $\mathcal{B}$'s mean in Lemma \ref{lem:ofiuMean} as it is a necessary preliminary for the variance and standard deviation.

\begin{lemma}\label{lem:ofiuMean}
    $\mathcal{B}$ has a mean of zero.
    \begin{align}
    \bar{\mathcal{B}} = 0
    \end{align}
\end{lemma}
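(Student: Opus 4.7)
The plan is to exploit the symmetry between the FP and FN cells of confusion matrices in $\mathcal{M}(n)$. First, I would introduce the involution $\phi : \mathcal{M}(n) \to \mathcal{M}(n)$ that interchanges the FP and FN entries of a confusion matrix while leaving TP and TN fixed. Because permuting cells preserves both non-negativity and the sum-to-$n$ constraint of Definition \ref{def:allCms}, $\phi$ is a well-defined bijection from $\mathcal{M}(n)$ onto itself, and $\phi \circ \phi$ is the identity.

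Second, I would observe directly from Definition \ref{def:margBenefit} that swapping FP and FN flips the sign of $\mathcal{B}$, i.e.\ $\mathcal{B}(\phi(CM)) = -\mathcal{B}(CM)$. Reindexing the sum defining the mean through $\phi$ then gives
\begin{align}
\sum_{CM \in \mathcal{M}(n)} \mathcal{B}(CM) \;=\; \sum_{CM \in \mathcal{M}(n)} \mathcal{B}(\phi(CM)) \;=\; -\sum_{CM \in \mathcal{M}(n)} \mathcal{B}(CM),
\end{align}
which forces the total, and hence $\bar{\mathcal{B}} = \tfrac{1}{\mathcal{N}(n)}\sum_{CM} \mathcal{B}(CM)$, to be zero.

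An equivalent route would appeal to Theorem \ref{thm:ordinalCount}: since every cell shares the same marginal count $C(x;n)$ across $\mathcal{M}(n)$, the averages of FP and of FN over $\mathcal{M}(n)$ are identical, so their difference contributes nothing to $\bar{\mathcal{B}}$. I expect no genuine obstacle; the only thing worth verifying carefully is that the FP/FN coordinate swap really does descend to a bijection on $\mathcal{M}(n)$, which follows immediately because the defining constraints of $\mathcal{M}(n)$ are invariant under permutations of the four cells.
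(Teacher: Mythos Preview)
Your proposal is correct and matches the paper's approach: the paper likewise argues that FP and FN have identical multisets of values over $\mathcal{M}(n)$ (your ``alternative route'' via equal marginals), then applies the associative law to conclude the differences sum to zero. Your involution $\phi$ is simply a cleaner packaging of the same FP/FN symmetry the paper invokes.
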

\begin{proof}
    $\mathcal{B}$ is equivalent to $\mathcal{B}(FP,FN;n)$, a function of two cells in CM, parameterized by the count of all cell values, $n$. With the input space $\mathcal{M}(n)$, for any possible elements of FP, there exists the same element of FN at the same multiplicity. For simplicity, let 
    \begin{align}
        FP = \{a_1, a_2, \dots, a_i, \dots, a_{\mathcal{N}(n)}\}\\
        FN = \{a_1, a_2, \dots, a_i, \dots, a_{\mathcal{N}(n)}\}
    \end{align}
    Then we use the Associative Law of Summation.
    \begin{align}
        \bar{\mathcal{B}}\vert n = \frac{1}{\mathcal{N}(n)}\sum_{i=1}^{\mathcal{N}(n)}\frac{a_i-a_i}{n} = 0
    \end{align}
\end{proof}

We now move to Theorem \ref{thm:stdB} to show how to calculate the standard deviation of $\mathcal{B}$ given any $n$ in constant time. After, we give the limit of $\sigma(\mathcal{B};n)$ in Corollary \ref{cor:bStdCon}. Next, we give the variance and its limit in Corollaries \ref{cor:bvar} and \ref{cor:bvarCon}.

\begin{theorem}\label{thm:stdB}
    The standard deviation of $\mathcal{B}\vert n$ is given by
    \begin{align}
        \sigma(\mathcal{B};n) = \sqrt{\frac{n+4}{10n}}
    \end{align}
\end{theorem}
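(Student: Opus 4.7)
The plan is to leverage Lemma \ref{lem:ofiuMean} (mean zero) so that $\sigma^2(\mathcal{B};n) = E[\mathcal{B}^2]$, where the expectation is uniform over $\mathcal{M}(n)$. Writing $\mathcal{B} = (FP-FN)/n$, expanding the square, and using the symmetry between the $FP$ and $FN$ cells (which have identical marginal distributions under $\mathcal{M}(n)$) reduces the problem to
$$\sigma^2(\mathcal{B};n) \;=\; \frac{2}{n^2}\Bigl(E[FP^2] - E[FP\cdot FN]\Bigr).$$
Hence the core task is a closed form for two sums, followed by an algebraic simplification down to $(n+4)/(10n)$.

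For $E[FP^2]$, I would use Theorem \ref{thm:ordinalCount} and the multiplicity $C(x;n) = (n-x+1)(n-x+2)/2$ to write
$$E[FP^2] \;=\; \frac{1}{\mathcal{N}(n)} \sum_{x=0}^{n} x^2 \,C(x;n),$$
and after the substitution $y = n-x$ this becomes a linear combination of $\sum_{y=0}^{n} y^k$ for $k \le 4$, which Faulhaber's formulas evaluate in closed form. For $E[FP\cdot FN]$, I would enumerate over $(FP,FN)=(a,b)$: with $a+b$ fixed, the remaining two cells $TP$ and $TN$ are any non-negative integers summing to $n-a-b$, contributing a factor of $n-a-b+1$, so
$$E[FP \cdot FN] \;=\; \frac{1}{\mathcal{N}(n)} \sum_{a=0}^{n}\sum_{b=0}^{n-a} a\,b\,(n-a-b+1),$$
which again reduces to Faulhaber sums after performing the inner summation in $b$.

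Finally I would substitute back, using $\mathcal{N}(n) = (n+1)(n+2)(n+3)/6$ from \eqref{eq:totalCombs}; the cubic factor in the denominator should cancel against the leading factors in the numerators, leaving $\sigma^2 = (n+4)/(10n)$, and $\sigma = \sqrt{(n+4)/(10n)}$ follows. The main obstacle is purely bookkeeping: there are two sums, each involving cubic/quartic Faulhaber expressions, and the numerators must be kept symbolic long enough for the $(n+1)(n+2)(n+3)$ factors to cancel cleanly rather than be expanded prematurely. A quick check at $n=1$, where direct enumeration of $\mathcal{M}(1)$ in \eqref{eq:allCmsNEquals1} yields $\sigma^2 = 1/2$ and the formula gives $\sqrt{5/10} = 1/\sqrt{2}$, serves as a useful guard against arithmetic errors, and the asymptotic $\sigma \to 1/\sqrt{10}$ matches the claim already invoked in the quasi-triangular proof.
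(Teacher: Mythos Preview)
Your proposal is correct. Both you and the paper start from Lemma~\ref{lem:ofiuMean} so that $\sigma^2(\mathcal{B};n)=E[\mathcal{B}^2]$ under the uniform measure on $\mathcal{M}(n)$, but the executions diverge from there. The paper keeps everything as a single triple sum,
\[
\sigma^2(\mathcal{B};n)=\frac{1}{\mathcal{N}(n)}\sum_{TP=0}^{n}\sum_{FP=0}^{n-TP}\sum_{FN=0}^{n-TP-FP}\left(\frac{FP-FN}{n}\right)^{2},
\]
asserts (with a SymPy check) that the inner sums collapse to $(n+1)(n+2)(n+3)(n+4)/(60n)$, and then divides by $\mathcal{N}(n)$ from \eqref{eq:totalCombs}. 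Your route instead expands the square, exploits the $FP\leftrightarrow FN$ symmetry to reduce to the two moments $E[FP^2]$ and $E[FP\cdot FN]$, and then invokes Theorem~\ref{thm:ordinalCount} for the first and a two-cell stars-and-bars count for the second, turning each into a Faulhaber sum. Your decomposition buys a more transparent by-hand derivation and actually reuses the counting machinery of Section~B; the paper's monolithic sum is quicker to state but effectively offloads the algebra to a CAS. Either way the factor $(n+1)(n+2)(n+3)$ cancels cleanly, and your $n=1$ sanity check is a useful guard that the paper does not include.
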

\begin{proof}
We first cite the formula for calculating the standard deviation for a set $S$ with cardinality $\mathcal{N}$ and mean $\bar{s}$.
\begin{align}
    \sigma = \sqrt{\frac{\sum_{s_i \in S}(s-\bar{s})^2}{\mathcal{N}}}
\end{align}
In $\mathcal{B}$, each score $s\in S$ is calculated by iterating over all confusion matrices (the set $\mathcal{M}$) and mapping through function $\mathcal{B}$. To find all the confusion matrices in $\mathcal{M}$, we take three loops (summations). The fourth cell (TN) is equal to the remainder of $n$ minus the three cells so it is not looped over.

Furthermore, from Lemma \ref{lem:ofiuMean}, the mean for $\mathcal{B}$ is $\bar{\mathcal{B}}=0$. So, $\mathcal{B}$'s standard deviation is given by
    % \sigma(\mathcal{B}\vert n)
    \begin{align}
        \sigma(\mathcal{B};n)
        = \sqrt{\frac{1}{\mathcal{N}(n)}\sum_{TP=0}^n\left(\sum_{FP=0}^{n-TP}\left(\sum_{FN=0}^{n-TP-FP}\left(\mathcal{B}(FP, FN;n) - \bar{\mathcal{B}}\right)^2\right)\right)}\notag\\
        = \sqrt{\frac{1}{\mathcal{N}(n)}\sum_{TP=0}^n\left(\sum_{FP=0}^{n-TP}\left(\sum_{FN=0}^{n-TP-FP}\left(\frac{FP-FN}{n}\right)^2\right)\right)}
    \end{align}
    The summations inside the radical simplify to
    \begin{align}
        \sum_{TP=0}^n\left(\sum_{FP=0}^{n-TP}\left(\sum_{FN=0}^{n-TP-FP}\left(\frac{FP-FN}{n}\right)^2\right)\right) \notag\\
        = \frac{(n+1)(n+2)(n+3)(n+4)}{60n}
    \end{align}
Using the definition of $\mathcal{N}$ from \eqref{eq:totalCombs}, 
\begin{align}
    \sigma(\mathcal{B};n) =\notag\\
    \sqrt{\frac{1}{(n+1)(n+2)(n+3)/6} \cdot \frac{(n+1)(n+2)(n+3)(n+4)}{60n}}\notag\\
    = \sqrt{\frac{n+4}{10n}}
\end{align}
\end{proof}

\begin{corollary}\label{cor:bStdCon}
The standard deviation of $\mathcal{B}$ converges to $1/\sqrt{10}\approx 0.316$.
    \begin{align}
    \lim_{n\rightarrow\infty}\sigma(\mathcal{B};n) = \lim_{n\rightarrow\infty} \sqrt{\frac{n+4}{10n}} = \frac{1}{\sqrt{10}}
    \end{align}
\end{corollary}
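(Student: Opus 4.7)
\textbf{Proof proposal for Corollary \ref{cor:bStdCon}.} The plan is to invoke the closed-form expression for $\sigma(\mathcal{B};n)$ established in Theorem \ref{thm:stdB} and then evaluate the limit by elementary means. Since Theorem \ref{thm:stdB} gives
\[
\sigma(\mathcal{B};n) = \sqrt{\frac{n+4}{10n}},
\]
the entire content of the corollary is the limit of a rational function composed with a square root, so no new combinatorial argument is needed.

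First I would rewrite the radicand in a form that isolates the dominant behavior for large $n$. A convenient way is to split
\[
\frac{n+4}{10n} = \frac{1}{10} + \frac{4}{10n} = \frac{1}{10} + \frac{2}{5n}.
\]
As $n \to \infty$, the second term tends to $0$, so the radicand tends to $1/10$. This reduces the problem to justifying that the limit commutes with the square root.

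Next I would appeal to the continuity of the square-root function on $[0, \infty)$: since $f(x) = \sqrt{x}$ is continuous at $x = 1/10$ and $(n+4)/(10n) \to 1/10$, we have $\sqrt{(n+4)/(10n)} \to \sqrt{1/10} = 1/\sqrt{10}$. Chaining these observations yields
\[
\lim_{n\to\infty} \sigma(\mathcal{B};n) = \lim_{n\to\infty} \sqrt{\frac{n+4}{10n}} = \sqrt{\lim_{n\to\infty}\frac{n+4}{10n}} = \sqrt{\frac{1}{10}} = \frac{1}{\sqrt{10}}.
\]

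There is no real obstacle here, as the proof is an immediate consequence of Theorem \ref{thm:stdB}; the only minor point worth stating is the continuity justification that lets the limit pass under the radical. I would conclude by noting the numerical value $1/\sqrt{10} \approx 0.316$ for the reader's convenience and to connect back to the threshold discussion in the main text that motivated the calculation.
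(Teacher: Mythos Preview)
Your proposal is correct and matches the paper's approach: the corollary is presented there as an immediate consequence of Theorem~\ref{thm:stdB}, with the limit of $\sqrt{(n+4)/(10n)}$ simply asserted (and checked via a CAS) rather than argued in detail. If anything, your write-up is more explicit than the paper's, since you spell out the continuity-of-$\sqrt{\cdot}$ step that lets the limit pass under the radical.
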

\begin{corollary}\label{cor:bvar}
    The variance of $\mathcal{B}$ is $(n+4)/10$.
    \begin{align}
        \sigma^2(\mathcal{B};n) = \frac{n+4}{10}
    \end{align}
\end{corollary}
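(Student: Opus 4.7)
The plan is to derive this corollary as an immediate algebraic consequence of Theorem \ref{thm:stdB}. First I would invoke the definitional identity that variance is the square of standard deviation, $\sigma^2(\mathcal{B};n) = \bigl(\sigma(\mathcal{B};n)\bigr)^2$, and then substitute the closed form $\sigma(\mathcal{B};n) = \sqrt{(n+4)/(10n)}$ proved in Theorem \ref{thm:stdB}. Squaring the radical yields $(n+4)/(10n)$ in one line, and no further combinatorial argument is needed.

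A subtlety worth flagging is that the statement as displayed gives the right-hand side as $(n+4)/10$, whereas squaring Theorem \ref{thm:stdB} produces $(n+4)/(10n)$. If the statement is read literally it is inconsistent with the preceding theorem, so I would either treat the missing factor of $n$ in the denominator as a typographical slip (the intended quantity being $(n+4)/(10n)$) or interpret the display as $n\,\sigma^2(\mathcal{B};n) = (n+4)/10$. Under either reading, the result reduces to the same one-step squaring calculation.

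For completeness, I would include an independent verification that does not route through the standard deviation formula. By Lemma \ref{lem:ofiuMean}, $\bar{\mathcal{B}} = 0$, so the variance equals the average of $\mathcal{B}^2$ over $\mathcal{M}(n)$. I would reuse the triple-sum identity established in the proof of Theorem \ref{thm:stdB},
\begin{equation}
    \sum_{TP=0}^{n}\sum_{FP=0}^{n-TP}\sum_{FN=0}^{n-TP-FP}\left(\frac{FP-FN}{n}\right)^{2} = \frac{(n+1)(n+2)(n+3)(n+4)}{60\,n},
\end{equation}
and divide by $\mathcal{N}(n) = (n+1)(n+2)(n+3)/6$ from \eqref{eq:totalCombs}; the cubic factor cancels cleanly, leaving $(n+4)/(10n)$.

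The main obstacle here is not mathematical but interpretive: confirming which form of the right-hand side is intended and making the squaring-cancellation bookkeeping explicit enough that the reader does not have to re-derive the triple sum. Because Theorem \ref{thm:stdB} already absorbs the only nontrivial computation, the corollary itself is essentially a one-line consequence.
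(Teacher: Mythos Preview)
Your proposal is correct and matches the paper's approach: the corollary is stated without an explicit proof, as an immediate consequence of Theorem~\ref{thm:stdB} (the authors simply note that it is verified via SymPy). You are also right to flag the missing factor of $n$ in the denominator; the displayed formula is inconsistent with both Theorem~\ref{thm:stdB} and Corollary~\ref{cor:bvarCon} (whose limit of $1/10$ only follows from $(n+4)/(10n)$, not $(n+4)/10$), so the intended statement is indeed $\sigma^{2}(\mathcal{B};n) = (n+4)/(10n)$.
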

\begin{corollary}\label{cor:bvarCon}
    The variance of $\mathcal{B}$ converges to $1/10$.
    \begin{align}
        \lim_{n\rightarrow \infty}\sigma^2(\mathcal{B};n) = \frac{1}{10}
    \end{align}
\end{corollary}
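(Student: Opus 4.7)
The plan is to derive the result as an immediate consequence of Theorem \ref{thm:stdB} together with the elementary arithmetic of limits. First I would square the closed form from Theorem \ref{thm:stdB} to obtain
\begin{align*}
\sigma^2(\mathcal{B};n) = \frac{n+4}{10n},
\end{align*}
which is how I interpret the variance formula of Corollary \ref{cor:bvar} (the displayed expression $(n+4)/10$ should read $(n+4)/(10n)$ for consistency with Theorem \ref{thm:stdB}; otherwise the quantity would diverge rather than approach $1/10$).

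Next I would split the rational expression into a constant part and a vanishing part,
\begin{align*}
\frac{n+4}{10n} = \frac{1}{10} + \frac{2}{5n},
\end{align*}
and apply the standard limit laws as $n\to\infty$. Since $2/(5n)\to 0$, the limit of the sum equals $1/10$, yielding exactly the statement of the corollary.

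There is no genuine obstacle in this argument; the content is a one-line computation once Theorem \ref{thm:stdB} is in hand. The only step requiring care is bookkeeping of the denominator: the radicand in Theorem \ref{thm:stdB} carries a factor of $n$ in the denominator, and this factor must be preserved when squaring to form the variance. If it were inadvertently dropped, the variance would grow linearly in $n$ and the stated convergence to $1/10$ would instead become divergence, so the proof hinges on this single algebraic check before the limit is taken.
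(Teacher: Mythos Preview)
Your proposal is correct and matches the paper's treatment: the corollary is stated without a separate proof and is meant to follow immediately from Theorem~\ref{thm:stdB} by squaring and letting $n\to\infty$, exactly as you do. Your observation that the displayed formula in Corollary~\ref{cor:bvar} is missing the factor of $n$ in the denominator is also correct and worth flagging.
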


We use SymPy's \cite{sympy} computer algebra system (CAS) to verify Theorem \ref{thm:stdB} and Corollaries \ref{cor:bStdCon}, \ref{cor:bvar}, and \ref{cor:bvarCon}.

\section{Empirical Studies}
We present four case studies from two well-regarded baselines: COMPAS (risk of recidivism) and Folktable’s Adult Employment (is person employed) datasets. 

In Figure \ref{fig:compas}, we see that OFI affirms preliminary findings of DI. This is an important revelation, as DI is susceptible to arguments and cases of higher prediction rates being correct. In return, some argue that higher rates would be due to systematic bias. However, OFI sidesteps the need to prove incorrect ground labels and shows that even if one assumes correct ground labels, there’s algorithmic bias against certain ethnicities. We believe this revelation will encourage more people to take corrective measures.

\begin{figure*}
    \centering
    \includegraphics[width=\linewidth]{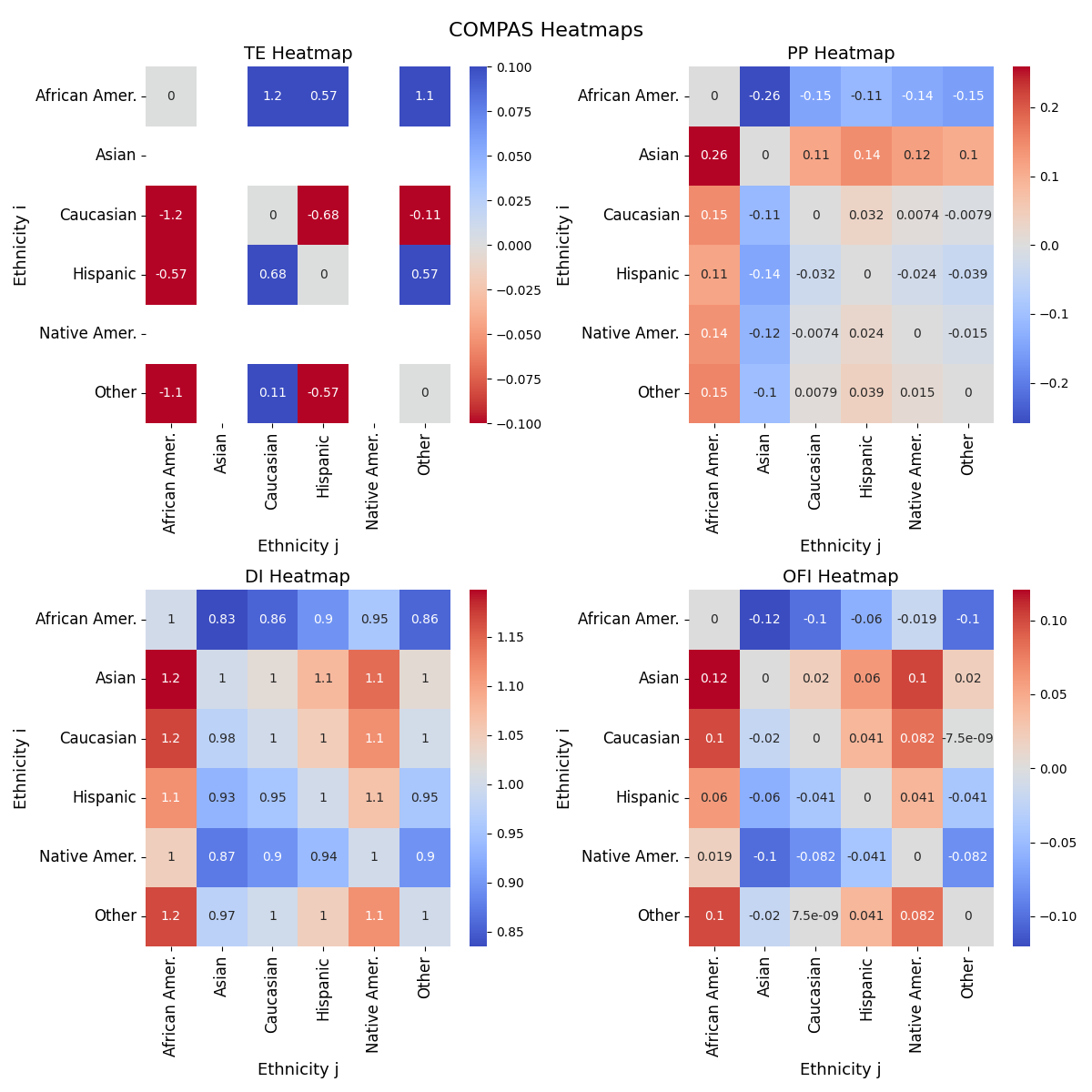}
    \caption{Assessing COMPAS with results published by Propublica\cite{propublica_compas}.}
    \label{fig:compas}
\end{figure*}

In Folktable’s cases, we investigate Georgia’s census database from 2014 to 2017 (a total of 393,236 observations) and predict if an individual is employed. Using the random forest classifier (RF), Figure \ref{fig:folkRf} reveals that OFI corroborates with DI in this case study, as it was in COMPAS. However, in the Na\"ive Bayes (NB) case study with Folktables, we find that OFI strongly disagrees with DI when context necessitates it.

\begin{figure*}
    \centering
\includegraphics[width=\linewidth]{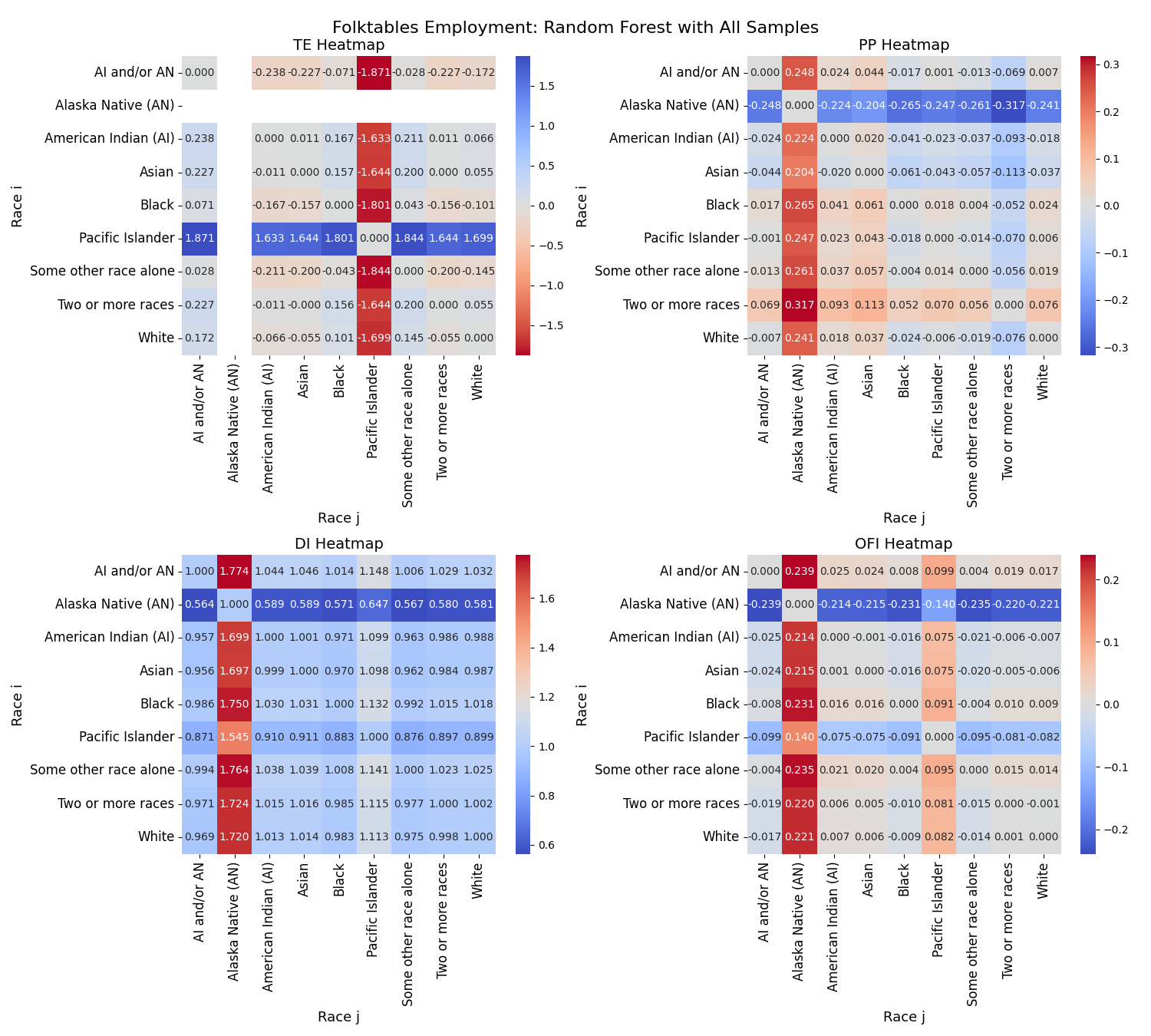}
    \caption{Assessing Folktables with a random forest classifier (all samples).}
    \label{fig:folkRf}
\end{figure*}

In the NB case study, OFI strongly disagrees with DI’s findings regarding Pacific Islanders. DI indicates that Pacific Islanders have a bias for them compared to 7/8 other races while OFI shows that, with context, Pacific Islanders have a bias for them over only 2/8 other races. The most prominent distinction occurs when comparing Pacific Islanders with Whites: DI suggests a large bias for Pacific Islanders while OFI shows a slight bias against them (Figure \ref{fig:folkNb}).

We further investigate OFI’s nuances by drawing 60 random samples to mimic a smaller dataset. In Figure \ref{fig:folkRfFew}, we see TE and DI’s problematic undefined cases. Furthermore, the “Two or more races” holds a constant 0 in DI, while OFI gives more information. OFI's nuance still shows substantial bias against the mixed-race class but indicates that they are more susceptible to bias when compared with whites as opposed to blacks.

Our practical use cases show that OFI gives unique insights into understanding the manifested bias. 

Notes: due to DI’s asymmetry, the heatmap's color map's gradient centers toward blue instead of gray like TE and OFI do. Furthermore, TE’s gradient is flipped as this definition has larger values indicating a bias toward Race $i$, unlike the others. Finally, since we assume the positive label is preferable, we flip the COMPAS labels so the positive is “not a recidivist”. 

\begin{figure*}
    \centering
\includegraphics[width=\linewidth]{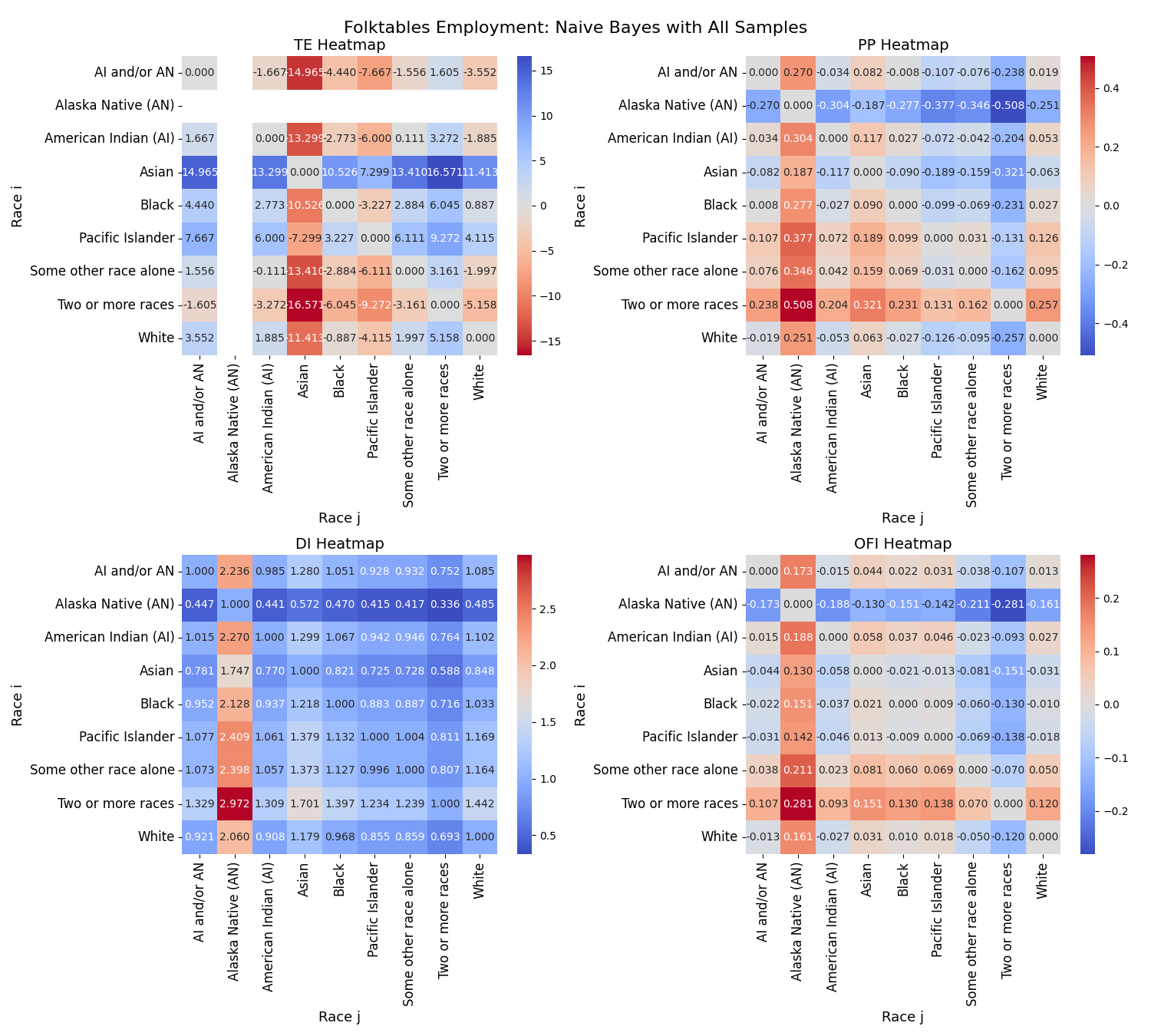}
    \caption{Assessing Folktables with Na\"ive Bayes.}
    \label{fig:folkNb}
\end{figure*}
\begin{figure*}
    \centering
    \includegraphics[width=\linewidth]{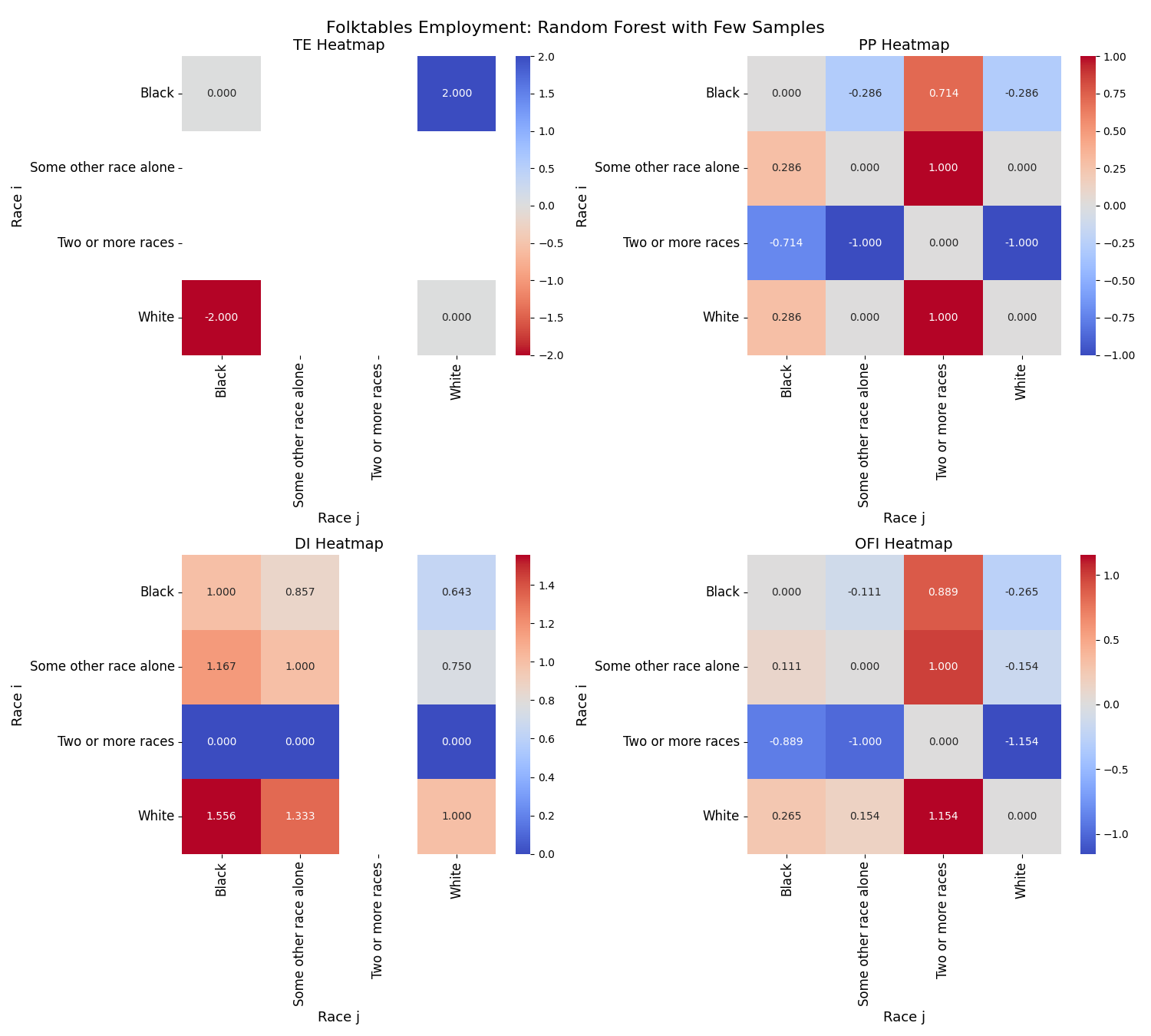}
    \caption{Assessing Folktables with a random forest classifier (60 samples).}
    \label{fig:folkRfFew}
\end{figure*}

\section*{Ethics Statement} This research is founded on the commitment to deepen the understanding of biases in machine learning through the critical lenses of legal philosophy and ethics. We emphasize that users of OFI should gain a thorough understanding of its meaning before taking corrective actions. Additionally, we recommend ongoing compliance checks with relevant local and international laws to ensure that the use of OFI adheres to evolving legal standards and contributes positively to the broader goal of fairness in machine learning applications. Through these practices, we aim to foster a responsible and legally consistent approach to the mitigation of bias in AI systems.

\end{document}